\documentclass[11pt]{article}
\usepackage[margin=1.05in]{geometry}
\usepackage{amsmath,amssymb,amsthm,mathtools}
\usepackage{microtype}
\usepackage[hidelinks]{hyperref}
\newtheorem{theorem}{Theorem}[section]
\newtheorem{proposition}[theorem]{Proposition}
\newtheorem{lemma}[theorem]{Lemma}
\newtheorem{corollary}[theorem]{Corollary}
\theoremstyle{remark}

\newcommand{\E}{\mathbb E}
\newcommand{\R}{\mathbb R}

\newcommand{\Cov}{\operatorname{Cov}}
\newcommand{\Var}{\operatorname{Var}}
\newcommand{\arctanh}{\operatorname{arctanh}}
\title{Critical and near-critical influence bounds\\for ferromagnetic Ising models}
\author{Yan Ru Pei}
\date{September 14, 2026}
\hypersetup{pdftitle={Critical and near-critical influence bounds for ferromagnetic Ising models},pdfauthor={Yan Ru Pei}}
\begin{document}
\maketitle
\begin{abstract}
For a ferromagnetic Ising model on a graph of maximum degree $\Delta\ge3$,
we prove a bound of order $\sqrt n$ on every row of the influence matrix
at the tree uniqueness threshold. The estimate is uniform in the degree,
the external fields, and all pinnings. More generally, if the couplings
are bounded by $\beta$ and
$\varepsilon=((\Delta-1)\tanh\beta-1)_+$, the bound is
$C(\sqrt n+n\varepsilon)$.
The proof combines a pointwise cavity bound with a positive-series
magnetization tilt and the field comparison theorem of Ding, Song and Sun.
The critical estimate removes the logarithm in recent general graphical
bounds for the ferromagnetic case.
As a consequence, zero-field single-site Glauber dynamics mixes in
polynomial time throughout the supercritical window
$\varepsilon=O(\sqrt{\log n/n})$, with the polynomial degree depending on the
window size.
\end{abstract}

\section{Introduction and statements}

At the uniqueness threshold of the regular tree, Ising correlations need
not have a uniformly summable tree bound. A finite graph nevertheless
constrains how large their total influence can be. The expected critical
scale is $\sqrt n$, and this scale governs spectral independence and
its applications to Glauber dynamics.
Chen, Chen, Yin and Zhang~\cite{CCYZ} studied this endpoint and retained
the square-root graphical bound as a conjecture after retracting a
coupling argument in their third version.
Recent work of Bencs and coauthors~\cite{NBW} gives an
$O_\Delta(\sqrt n\log n)$ bound on regular graphs through
non-backtracking walks, and the sharp square-root order when a
fixed spectral gap is available.
Here we obtain the exact square-root order for ferromagnets, together
with a finite-size estimate above the threshold.

Let $G=(V,E)$ be a finite simple graph, $|V|=n\ge1$, with maximum
degree at most an integer $\Delta\ge3$.
For edge couplings $J_e\ge0$ and fields $a\in\R^V$, write
\[
 \mu_a(\sigma)=\frac1{Z_a}
 \exp\left\{\sum_{uv\in E}J_{uv}\sigma_u\sigma_v
                     +\sum_{v\in V}a_v\sigma_v\right\},
 \qquad \sigma\in\{-1,1\}^V .
\]
We use the source-first influence convention
\begin{equation}\label{eq:influence}
 \Psi_a(u,v)=\mu_a(\sigma_v=1\mid\sigma_u=1)
                   -\mu_a(\sigma_v=1\mid\sigma_u=-1),
 \qquad \Psi_a(u,u)=1.
\end{equation}
All entries are nonnegative. Thus $\|\Psi_a\|_\infty$ is the maximum
row sum. If $D_a=\operatorname{diag}(\Var_{\mu_a}\sigma_u)$, then
$\Psi_a=D_a^{-1}\Cov_{\mu_a}(\sigma)$ is similar to the symmetric matrix
$D_a^{-1/2}\Cov_{\mu_a}(\sigma)D_a^{-1/2}$.
Our row bounds in particular bound its largest eigenvalue.
They therefore give spectral independence, with the unit diagonal
included in the convention.

\begin{theorem}\label{thm:critical}
Suppose $0\le J_e\le\beta_c$, where
$\beta_c=\arctanh(1/(\Delta-1))$.
For every finite field vector and every pinning leaving $k\ge1$ vertices,
the influence matrix of the conditional measure satisfies
\[
 \|\Psi\|_\infty\le4\sqrt{6e}\,\sqrt k<17\sqrt k .
\]
\end{theorem}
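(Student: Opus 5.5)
We reduce first to an unpinned measure. Pinning a set of vertices to $\pm1$ yields a ferromagnetic Ising model on the induced graph of the $k$ free vertices, with modified fields. Its maximum degree is still at most $\Delta$ and its couplings still lie in $[0,\beta_c]$. So it suffices to prove $\|\Psi_a\|_\infty\le 4\sqrt{6e}\sqrt n$ for every field vector $a$ on a graph with $n$ vertices. Fix a source $u$. By the Griffiths/FKG inequalities all entries $\Psi_a(u,v)$ are nonnegative, so the row sum $S_u=\sum_v\Psi_a(u,v)$ equals the response of the total magnetization to flipping $\sigma_u$:
\[
S_u=\frac{\partial}{\partial a_u}\E_{\mu_a}\Big[\sum_v\sigma_v\Big]\Big/\Var_{\mu_a}\sigma_u .
\]

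The plan has three ingredients, matching the abstract.

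\textbf{Pointwise cavity bound.} Using the self-avoiding-walk tree of Weitz rooted at $u$, the influence $\Psi_a(u,v)$ is dominated by a sum over paths from $u$ to $v$. Each edge of such a path contributes a factor $\tanh J_e$ times a cavity derivative that is at most $1$ and decreases with the local effective field. At $\beta_c$ the number of paths of length $\ell$ is at most $\Delta(\Delta-1)^{\ell-1}$, and this exactly cancels the factor $\tanh^\ell\beta_c$. So the contribution at distance $\ell$ is $O(1)$ at zero field, and it is damped by a factor $\prod(1-m_w^2)$-type whenever the effective fields (cavity magnetizations) along the path are nonzero. Concretely, we aim for a bound of the form
\[
\Psi_a(u,v)\le \sum_{\text{paths }u\to v}\prod_{w}\big(1-m_w^2\big)\,(\Delta-1)^{-\ell},
\]
where $m_w$ are the cavity magnetizations.

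\textbf{Positive-series magnetization tilt.} Add a uniform field $h>0$ to all vertices. By the field comparison theorem of Ding, Song and Sun, the influences under $a$ are controlled by those under $a+h\mathbf 1$, up to a factor. At the critical tree, a field $h$ produces magnetization of order $h^{1/3}$ or at least of order $\sqrt h$; hence the per-step damping $1-m^2\le e^{-ch}$ makes the path series geometric with total mass $O(1/h)$ in $\ell$, or more precisely $O(h^{-1/2})$ after optimizing. On the other hand, the tilt costs a multiplicative error $e^{O(hn)}$, or an additive error obtained by integrating $\partial_h$ of the magnetization, which is bounded by $n$ via GHS concavity. Balancing $1/h$-type decay against $hn$ gives $h\asymp n^{-1/2}$ and a row sum $O(\sqrt n)$. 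The constants $\sqrt{6e}$ and $4$ should come from choosing $h$ so that $e^{\text{(cost)}}\le e$ and from a second-moment ($\sum \ell\cdot$) series evaluation giving the factor $6$.

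\textbf{Main obstacle.} The hardest step is making the tilt monotone and positive. We need that adding $h$ does not decrease the magnetization anywhere, so that the damping factors hold uniformly, and this must work for arbitrary signed fields $a$. I would first try to handle negative fields by a global spin flip on regions, or by appealing to Ding--Song--Sun to reduce to $a\ge0$, where GHS and FKG give monotonicity of all cavity magnetizations in $h$. The remaining delicate point is converting the per-path damping into a clean geometric sum when the paths in the self-avoiding-walk tree are not independent. Here I would use the tree recursion directly, proving inductively that the total influence below a node of depth $\ell$ is at most $(1+c\sqrt h)^{-\ell}$.
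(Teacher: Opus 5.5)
Your reduction to the unpinned measure is fine, and you name the right ingredients. The central step, however, has a genuine gap: nothing in your plan validly transfers information from a positive probe field back to the measure you must control.

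The Ding--Song--Sun theorem does not compare influences at $a$ with those at $a+h\mathbf 1$. It says the symmetric response $m_v(g+b)-m_v(g-b)$ is largest at $g=0$. With $b$ an infinite field at $u$ it gives $0\le\Psi_a(u,v)\le\E_0[\sigma_u\sigma_v]$. After this step you are at \emph{zero} field, where a probe field produces no damping, and you need an upper bound on $\chi_u=\sum_v\E_0[\sigma_u\sigma_v]$. None of your transfer mechanisms gives one:
\begin{itemize}
\item GHS concavity goes the wrong way: $m_v(h)\le h\chi_v$ is a lower bound on $\chi_v$.
\item The additive error ``bounded by $n$ via GHS'' has no justification.
\item A multiplicative change-of-measure cost $e^{O(hn)}$ at your scale $h\asymp n^{-1/2}$ is $e^{\Theta(\sqrt n)}$.
\end{itemize}

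The missing idea is the positive-series tilt. By spin-flip symmetry, and the first Griffiths inequality applied to each term of the $\sinh$ series,
\[
 h\chi_v\le\E_0[\sigma_v\sinh(hM)]=m_v(h)\,\E_0e^{hM},
 \qquad \log\E_0e^{hM}=\int_0^h\sum_w m_w(s)\,ds .
\]
This needs only an \emph{upper} bound on magnetizations in a uniform field, which the cavity inequality supplies. After raising all couplings to $\beta_c$, the largest oriented-edge cavity field $X$ satisfies $X-(\Delta-1)\arctanh(\tanh X/(\Delta-1))\le h$. Hence $m_w(h)\le4h^{1/3}$ at every vertex of every graph. Then $\log\E_0e^{hM}\le3nh^{4/3}$, and the choice $h=(6n)^{-3/4}$, not $n^{-1/2}$, gives exactly $4\sqrt{6e}\sqrt n$.

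Your damping route is also unsound on its own terms. It requires \emph{lower} bounds on cavity magnetizations along self-avoiding walks, and these are not uniform over graphs: in sparse regions $m_w$ is only of order $h$. So the damping has to be balanced against the local growth of paths. That balancing is essentially the path-counting difficulty on which the non-backtracking-walk approach cited in the introduction loses a $\log n$.

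Moreover, your inductive bound $(1+c\sqrt h)^{-\ell}$ for the influence reaching depth $\ell$ cannot hold in tree-like regions. On the critical $(\Delta-1)$-ary tree the cavity fields are of order $h^{1/3}$. Each level is therefore damped only by a factor $1-\Theta(h^{2/3})$, far weaker than $1-\Theta(\sqrt h)$ for small $h$.
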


The exponent $1/2$ is sharp: the critical examples in
\cite[Corollary~5.3]{CCYZ} have largest influence eigenvalue of this
order. The constant here is independent of $\Delta$.

\begin{theorem}\label{thm:crossover}
Suppose $0\le\beta<\infty$, $0\le J_e\le\beta$, and put
$\varepsilon=((\Delta-1)\tanh\beta-1)_+$.
For every finite field vector and every pinning leaving $k\ge1$ vertices,
\begin{equation}\label{eq:crossover}
 \|\Psi\|_\infty
       \le \min\{k,\,512e(\sqrt k+k\varepsilon)\}.
\end{equation}
In particular, if
$(\Delta-1)\tanh\beta\le1+A/\sqrt n$ for some fixed $A\ge0$,
the bound is $512e(1+A)\sqrt k$ after every pinning.
\end{theorem}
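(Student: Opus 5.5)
The bound $\|\Psi\|_\infty\le k$ holds for every pinning and costs nothing. After pinning, $\Psi$ is a $k\times k$ matrix, its entries are differences of conditional probabilities, and so each entry lies in $[0,1]$; every row sum is therefore at most $k$. The content is the second term. My plan is to repeat the argument behind Theorem~\ref{thm:critical} with a slightly supercritical contraction rate, and keep track of how the excess $\varepsilon$ spreads through the argument. Pinning only changes the fields on the remaining graph and its maximum degree is still at most $\Delta$, so it suffices to treat an arbitrary field vector $a$ on a graph with $k$ vertices.

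I fix a source $u$ and write the row sum $\sum_v\Psi_a(u,v)$ as the response of the total magnetization $M=\sum_v\sigma_v$ to the field at $u$:
\[
\sum_v\Psi_a(u,v)=\frac{\partial_{a_u}\E_a M}{\Var_a\sigma_u}.
\]
The critical proof has three ingredients, and I would carry each over.
\begin{enumerate}
\item A pointwise cavity bound: on the self-avoiding-walk tree, $\Psi_a(u,v)$ is dominated by a sum over walks of products of $\tanh J_e$ times cavity-field factors. With $(\Delta-1)\tanh\beta=1+\varepsilon$, the walks of length $\ell$ contribute a factor of at most $(1+\varepsilon)^\ell$ beyond the critical count.
\item A positive-series magnetization tilt: add a uniform field $h\ge0$. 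By GKS/GHS-type monotonicity and the Ding--Song--Sun field comparison theorem, $\Psi_a$ is controlled by the influence under $a+h\mathbf 1$ plus a term of order $1/h$ coming from the magnetization budget, since $\E M\le k$ and the integrated susceptibility over $h$ is at most $2k$.
\item Once the field is tilted by $h$, the cavity tanh-factors decay like $(1-ch)$ per step. The critical row sum is then $\lesssim\sum_\ell(1-ch)^\ell\approx1/h$, and optimizing $h\sim k^{-1/2}$ against the $kh$ budget term gives the $\sqrt k$ bound.
\end{enumerate}

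In the supercritical case the per-step factor becomes $(1+\varepsilon)(1-ch)$. I would choose $h=\max\{k^{-1/2},\,C'\varepsilon\}$, so that $(1+\varepsilon)(1-ch)\le1-ch/2$. The geometric series is then $O(1/h)\le O(\sqrt k)$, and the budget term $O(kh)$ is $O(\sqrt k+k\varepsilon)$. Adding the two gives $C(\sqrt k+k\varepsilon)$. I would track the constants loosely, expecting a loss of a factor of about 4–8 over the critical constant $4\sqrt{6e}$, which would produce something like $512e$.

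The main obstacle is making the tilt compatible with $\varepsilon>0$. At criticality the cavity contraction under field $h$ is proved uniformly in the pinning. Here I need a quantitative version: the derivative of the cavity recursion at tilted field $h$ must be at most $\tanh\beta\,(1-c\,h)$ with an explicit $c$, and this must hold uniformly over all boundary fields and over large $\beta$. I would first try the concavity of $x\mapsto\arctanh(\tanh\beta\tanh x)$ together with the fact that a positive field forces the cavity magnetization to be at least of order $h$ on the positive side. If the loss is only quadratic in $h$ near zero, I would instead take $h\sim\sqrt\varepsilon$ in the intermediate regime and recheck the optimization.

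The final claim is a substitution. If $(\Delta-1)\tanh\beta\le1+A/\sqrt n$, then $\varepsilon\le A/\sqrt n$, so $k\varepsilon\le A\sqrt k$ because $k\le n$. The bound \eqref{eq:crossover} then gives $512e(1+A)\sqrt k$.
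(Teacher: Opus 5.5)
\paragraph{Verdict: genuine gap in the core argument.} Your trivial bound $\|\Psi\|_\infty\le k$, the reduction of a pinning to fields on $G[W]$, and the final substitution $k\varepsilon\le A\sqrt k$ are fine. The problem is step~2. It asserts that $\Psi_a$ is controlled by the influence at $a+h\mathbf 1$ plus an additive budget term, but gives no mechanism for this, and the tools you cite point the wrong way:
\begin{itemize}
\item The Ding--Song--Sun inequality $m_v(g+b)-m_v(g-b)\le m_v(b)-m_v(-b)$ says that adding a base field can only \emph{decrease} a response.
\item GHS makes $\partial_h m_v$ decreasing in $h\ge0$.
\item $\E M\le k$ only gives $h\Var_h(M)\le k$, which again is a bound at the tilted field.
\end{itemize}
So each of these bounds tilted quantities by untilted ones, never conversely. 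Nothing produces an additive $O(1/h)+O(kh)$. If the $kh$ is meant as $\log\E_0e^{hM}\le h\E_hM\le kh$, it sits in an exponent, and $h\sim k^{-1/2}$ then costs $e^{\sqrt k}$.

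\paragraph{The missing idea: the paper's positive-series tilt.} First, Ding--Song--Sun with $b_u=+\infty$ gives $0\le\Psi_a(u,v)\le\E_0[\sigma_u\sigma_v]$. This disposes of arbitrary, possibly negative, fields, where GKS-based tilting would not be available anyway. Then, since every $\E_0[\sigma_vM^{2j+1}]\ge0$ by the first Griffiths inequality,
\[
 h\chi_v\le\E_0[\sigma_v\sinh(hM)]=m_v(h)\,\E_0e^{hM}
   =m_v(h)\exp\Bigl\{\int_0^h\sum_um_u(s)\,ds\Bigr\}.
\]
The budget is multiplicative, and it is affordable only because of a nontrivial pointwise bound on $m_u(s)$.

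\paragraph{Your walk-tree steps do not supply that bound.} The recursion derivative is
\[
 f_\beta'(x)=\frac{t(1-\tanh^2x)}{1-t^2\tanh^2x}=t\bigl(1-(1-t^2)x^2+O(x^4)\bigr),\qquad t=\tanh\beta .
\]
So the gain per step is quadratic in the message, not linear in $h$. With only the guaranteed lower bound $L\ge h$ you get a factor $1-ch^2$ per step. Above the threshold, $(1+\varepsilon)(1-ch)\le1-ch/2$ is simply not available.

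The correct scales are $h\asymp k^{-3/4}$, with magnetizations of order $h^{1/3}$ (or $\sqrt\varepsilon$ above the threshold). The paper obtains them without any walk expansion, and Theorem~\ref{thm:critical} is proved the same way:
\begin{itemize}
\item The cavity inequality of Lemma~\ref{lem:cavity}, applied at the maximizing oriented edge, gives $F_\beta(X)\le h$.
\item Hence $m_u(h)\le16(\sqrt\varepsilon+h^{1/3})$ for $\varepsilon\le1/64$.
\item So $\log\E_0e^{hM}\le16k\sqrt\varepsilon\,h+12kh^{4/3}$.
\item The choice $h=k^{-3/4}/\bigl(16(1+k^{1/4}\sqrt\varepsilon)\bigr)$ then yields exactly $512e(\sqrt k+k\varepsilon)$.
\item For $\varepsilon>1/64$ the trivial bound $k<512e\,k\varepsilon$ takes over.
\end{itemize}

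\paragraph{The constant.} Your constant is not derived either: a factor $4$--$8$ over $4\sqrt{6e}\approx16$ would not reach $512e\approx1392$.
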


Here is the corresponding dynamical consequence.
Let $T_{\rm mix}$ denote worst-state total-variation mixing time at
distance $1/4$ for discrete-time, random-scan heat-bath dynamics.
\begin{corollary}\label{cor:mixing}
In zero external field, under the hypotheses of
Theorem~\ref{thm:crossover}, suppose $\varepsilon\le1/64$ and put
$A=\sqrt n\,\varepsilon$ and $C=512e$.
There is a numerical constant $K$ such that
\[
 T_{\rm mix}\le
 K n^{\,2+2/(\Delta-2)}\log(en)
            \exp\{5C(A+A^2/2)\}.
\]
Consequently, for each fixed $B\ge0$, the window
$\varepsilon\le B\sqrt{\log n/n}$ has polynomial mixing time for all
sufficiently large $n$ depending on $B$.
More explicitly, the bound is
$O(n^{\,2+2/(\Delta-2)+5CB^2}\log(en))$, with a constant depending on $B$.
\end{corollary}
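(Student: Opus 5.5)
The plan is to combine the spectral-independence bound of Theorem~\ref{thm:crossover} with the local-to-global machinery for Glauber dynamics. We use the version that tolerates non-constant spectral independence: Chen--Liu--Vigoda, together with the refinements of Anari--Jain--Koehler--Pham--Vuong that allow the bound to vary with the number of remaining vertices. In zero field with all couplings bounded by $\beta$, every pinning leaves a ferromagnetic Ising measure with fields coming from the pinned neighbours. Theorem~\ref{thm:crossover} therefore applies uniformly: a link with $k$ free vertices has spectral independence $\eta_k\le C(\sqrt k+k\varepsilon)$, using the unit-diagonal convention. The local-to-global theorem gives a spectral-gap (or entropy-factorization) lower bound for the down-up walk on $n$ vertices of the form
\[
 \mathrm{gap}\ \ge\ \frac1n\prod_{k=2}^{n}\Bigl(1-\frac{\eta_k-1}{k-1}\Bigr),
\]
or, in the block-factorization form, approximate tensorization with a constant of order $\prod_k(1+\eta_k/k)$. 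We will use whichever version admits the $\eta_k\ge k/2$ regime through a coarser block step; see the last paragraph.

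The product is controlled by the bound
\[
 \sum_{k}\frac{\eta_k}{k}\le C\sum_{k\le n}\Bigl(k^{-1/2}+\varepsilon\Bigr)\le C(2\sqrt n+n\varepsilon)=C\sqrt n\,(2+A),
\]
which is too large: it would give $e^{C\sqrt n}$. So we cannot simply sum the bound. Instead we use the $n^{\,2+2/(\Delta-2)}$ shape of the target, which signals the standard route: use $\eta$-spectral independence to obtain $\lceil\theta n\rceil$-block factorization, then compare block dynamics to single-site dynamics via a degree-dependent bound. That comparison is where the $\Delta$ exponent and the $e^{O(\text{const})}$ come from. Concretely, we will apply the Chen--Liu--Vigoda theorem that $\eta$-spectral independence with $b$-marginal boundedness yields mixing $n^{O(\eta/\ldots)}$. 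Its tight form (Chen--Feng--Yin--Zhang, ``optimal mixing for two-state spin systems'' / the trick of \emph{field dynamics} of Chen--Feng--Yin--Zhang) handles $\eta=O(\sqrt n)$ as follows. Field dynamics at a magnetizing parameter $\theta$ reduces to a measure with field $\theta$-tilted, where influence is summable.

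Our plan is therefore as follows. Step one: apply field dynamics, which relates the original mixing to the mixing of Glauber on measures with $k\approx\theta n$ effective free sites, losing a factor $\exp(\sum_{k\ge\theta n}\eta_k/k)$. Step two: bound that sum by Theorem~\ref{thm:crossover}, giving $\exp\bigl(C\int_{\theta n}^{n}(k^{-1/2}+\varepsilon)\,dk\bigr)$. Choosing $\theta$ as a fixed power gives exponents of the type $C(A+A^2/2)$ after optimizing, for instance with $\theta n\sim n/(1+A)^2$-type scales. The constant $5$ absorbs losses. Step three: the subcritical-like residual system mixes in $O(n^{2+2/(\Delta-2)}\log n)$ via the marginal-boundedness comparison with constant depending only on $\Delta$. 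The window consequence follows since $A\le B\sqrt{\log n}$ gives $\exp(5CA^2/2)\le n^{5CB^2/2}$, and $A$ is lower order. The main obstacle is step two: keeping the exponent polynomial requires the field-dynamics loss to involve only $\eta_k/k$ integrated over the large-$k$ range. The $\sqrt k$ scale must integrate to $O(1)$ per logarithmic scale, and I would first check the entropic independence form $\prod(1-\eta_k/k)^{-1}$ restricted to $k\ge\theta n$.
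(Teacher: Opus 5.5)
\textbf{There is a genuine gap, at Step two.} You claim that field dynamics costs only $\exp(\sum_{k\ge\theta n}\eta_k/k)$. Field dynamics gives no such guarantee. Its comparison cost is governed by spectral independence of the field-tilted measures over the whole range of tilts, and the paper only gives the field-uniform bound $O(\sqrt n)$ there.

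Even granting the claimed loss, the arithmetic does not close. We have $\sum_{\theta n\le k\le n}Ck^{-1/2}\approx 2C\sqrt n\,(1-\sqrt\theta)$, which is $\Theta(\sqrt n)$ for every fixed $\theta<1$. With your choice $\theta n\sim n/(1+A)^2$ it is about $2C\sqrt n\,A/(1+A)$, not $C(A+A^2/2)$. At criticality ($A=0$) that choice means no reduction at all.

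Keeping the loss $O(1)$ forces $1-\theta=O(n^{-1/2})$. Then the residual system is essentially the original critical one, so Step three has nothing to stand on. Its ``subcritical-like'' mixing in $O(n^{2+2/(\Delta-2)}\log n)$ is asserted, not derived. The $\Delta$-dependent exponent also does not arise from a block-to-single-site comparison. As you noticed yourself, integrating the bound $C\sqrt k$ over the number of free sites costs order $\sqrt n$ in the exponent.

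\textbf{The missing idea is to localize in temperature rather than in pinnings or fields.} The paper applies the entropy localization theorem \cite[Theorem~3.21]{CCYZ}. Write the density as $\exp(\frac12\sigma^{T}J\sigma+a\cdot\sigma)$ with $J=W+\Delta\beta I\succeq0$ and $\|J\|_2\le2\Delta\beta$. Then approximate tensorization holds with $K_{\rm AT}\le\exp\{2\Delta\int_0^\beta\mathfrak c(q)\,dq\}$. Here $\mathfrak c(q)$ bounds the largest influence eigenvalue for all fields and all couplings at most $q$. Nonuniform couplings reduce to uniform ones via \eqref{eq:DSS}, GKS, and Perron monotonicity.

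\emph{Below threshold.} For $q<b_c$, use the subcritical bound $\Delta/(d(1-d\tanh q))$, with $d=\Delta-1$, from \cite[Lemma~3.23]{CCYZ}, up to $b_c(1-n^{-1/2})$. Its integral is $\frac{\Delta}{\Delta-2}\log n+O(1)$, which is the source of $n^{\Delta/(\Delta-2)}$. The cap $17\sqrt n$ is paid only on a temperature interval of length $b_c n^{-1/2}$, so it costs $O(1)$.

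\emph{Above threshold.} For $q\in[b_c,\beta]$, Theorem~\ref{thm:crossover} gives $\mathfrak c(q)\le C(\sqrt n+nr)$ with $r=d\tanh q-1$. Since $2\Delta\,dq/dr<5$, this part of the integral is at most $5C(\sqrt n\,\varepsilon+n\varepsilon^2/2)=5C(A+A^2/2)$.

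\emph{Conclusion.} Entropy contraction with $\log(1/\mu_{\min})\le4n$ gives $T_{\rm mix}\lesssim nK_{\rm AT}\log(en)$, and $n\cdot n^{\Delta/(\Delta-2)}=n^{2+2/(\Delta-2)}$. Your deduction of the window statement from the main display is fine once that display is established.
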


Polynomial mixing in a sparse critical window is an explicit question
in the introduction of Galanis, \v{S}tefankovi\v{c} and Vigoda~\cite{GSV}.
Corollary~\ref{cor:mixing} gives the ferromagnetic Ising case.
The proof does not yield the optimal mixing exponent.

The proof uses a small uniform positive field as a probe of zero-field
susceptibility. A cavity inequality from the magnetization argument of
Carlson, Davies, Kolla and Perkins~\cite{CDKP} controls the magnetization
at each vertex by $Ch^{1/3}$ at criticality.
Positivity of all terms in a spin expansion then gives
$h\chi_v\le m_v(h)\E_0e^{hM}$, where $M$ is total magnetization.
Choosing $h$ of order $n^{-3/4}$ gives the square-root bound on
each susceptibility row. Ding, Song and Sun's comparison
theorem~\cite{DSS} transfers it to arbitrary fields.
Above criticality the same argument uses
$m_v(h)\le C(\sqrt\varepsilon+h^{1/3})$ and a smaller probe field.

\section{A pointwise cavity estimate}

We first work in a uniform field $h>0$. Denote the magnetization at $u$
in a graph $H$ by $m_u^H$, and suppress $h$ in the notation.
The Griffiths correlation inequalities imply that spin-product
expectations increase when nonnegative edges are added.
These are the classical inequalities used in~\cite[Section~4]{CDKP}.

\begin{lemma}\label{lem:cavity}
Suppose every edge has coupling $\beta$, and let
$t=\tanh\beta$ and $f_\beta(x)=\arctanh(t\tanh x)$.
For each oriented edge put
$L_{u\to v}=\arctanh(m_u^{G-uv})$. Then
\begin{align}
 L_{u\to v}
    &\le h+\sum_{w\in N(u)\setminus\{v\}}f_\beta(L_{w\to u}),
       \label{eq:cavity}\\
 \arctanh(m_u^G)
    &\le h+\sum_{w\in N(u)}f_\beta(L_{w\to u}).
       \label{eq:root}
\end{align}
\end{lemma}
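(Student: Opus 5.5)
We prove \eqref{eq:cavity} and \eqref{eq:root} by the same argument, applied to the graph $H=G-uv$ with root $u$ and to $H=G$ with root $u$. In each case the neighbours $w$ of $u$ in $H$ are the $w\in N(u)\setminus\{v\}$, respectively the $w\in N(u)$. The idea is to compare $H$ with the graph $H'$ in which $u$ is split into separate copies, one per incident edge. Write $u_w$ for the copy attached to $w$. Each copy carries field $h$, and the original $u$ is kept as an isolated vertex with field $h$. We then remove edges in the opposite direction, using the Griffiths inequalities, so that the magnetization at $u$ becomes explicitly computable in terms of the cavity quantities $m_w^{G-uw}$.

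\textbf{Step 1 (exact formula for $u$ in terms of its neighbours).} Condition on $\sigma_{N_H(u)}$. The log-odds of $\sigma_u$ is then $h+\beta\sum_w\sigma_w$. Hence
\[
\arctanh(m_u^H)=h+\sum_w f_\beta(L_{w\to u})
\]
would follow if the neighbour spins were independent with $\tanh$-means $m_w^{G-uw}$. They are not, so we need an inequality instead.

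The clean route is the standard tree/cavity comparison, as in \cite[Section~4]{CDKP}. The magnetization at $u$ in $H$ is at most the magnetization at $u$ in the graph $\hat H$ obtained as follows. Each edge $uw$ is replaced by an edge $u w'$ to a fresh copy $w'$ of $w$. The copy $w'$ is joined to its own copy of $G-uw$, rooted at $w'$, and all other edges of $H$ at $u$ are removed.

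\textbf{Step 2 (justifying the comparison).} Passing from $H$ to $\hat H$ enlarges the graph, since we add disjoint copies of $G-uw$. Griffiths' second inequality shows that $m_u$ increases when the copies of $G-uw$ are attached via identification edges of infinite coupling. Concretely, with $h>0$ we view the field as coupling to a ghost $+$ spin, and add ferromagnetic edges. Each copy $w'$ sees exactly $G-uw$ below it, which dominates whatever $w$ sees in $H$ with $u$ deleted. So $\langle\sigma_u\rangle_H\le\langle\sigma_u\rangle_{\hat H}$ for a suitably constructed $\hat H$ that contains $H$ as a subgraph after identifications.

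This is the main obstacle: the Griffiths inequalities give monotonicity under \emph{adding} edges or couplings, so $H$ must be exhibited as a \emph{subgraph} (or a limit under decreasing couplings) of a structure whose $u$-magnetization factorizes. I would try the following first. Start from the disjoint union of $u$ with the graphs $G-uw$, $w\in N_H(u)$, each joined to $u$ only at its copy of $w$. Obtain $H$ by sending to $+\infty$ the couplings that identify the duplicate vertices. Couplings increase in this process, so $m_u$ increases, which is the wrong direction. The alternative I would use instead is a monotone-in-the-root argument. In the disjoint-copy graph $\hat H$, the branches are conditionally independent given $\sigma_u$. So $\arctanh(m_u^{\hat H})$ equals $h$ plus the sum of effective fields $f_\beta(\arctanh m_{w}^{\text{branch}})$, with branch $G-uw$ containing the copy of $w$. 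We still need $m_u^H\le m_u^{\hat H}$. For this I would follow CDKP's proof, which shows via GKS applied to the ``duplicated'' system that the influence of $u$'s neighbourhood is maximised when the branches are decoupled.

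\textbf{Step 3 (conclusion).} Given $m_u^H\le m_u^{\hat H}$ and the exact product formula on $\hat H$, we use two facts. The function $f_\beta$ is increasing, and the branch magnetization equals $m_w^{G-uw}$, whose $\arctanh$ is $L_{w\to u}$ by definition. Together these yield \eqref{eq:cavity} for $H=G-uv$ and \eqref{eq:root} for $H=G$.
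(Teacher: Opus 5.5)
\textbf{There is a genuine gap.} Your Step 2 carries the entire content of the lemma, and it is not proved. You state the cut-vertex formula on $\hat H$ correctly: $\arctanh(m_u^{\hat H})=h+\sum_w f_\beta(L_{w\to u})$. It follows that the comparison $m_u^H\le m_u^{\hat H}$ is \emph{literally equivalent} to \eqref{eq:cavity} and \eqref{eq:root}, so deferring it to ``CDKP's proof'' leaves nothing proved.

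Griffiths monotonicity in the couplings cannot supply it either. Passing from $H$ to $\hat H$ enlarges the branches, but it also destroys the correlations among the neighbours of $u$, and that second change is not a monotone change of couplings. Your identification attempt shows only that $m_u^H$ and $m_u^{\hat H}$ are both at most the magnetization in the identified graph, which dominates $H$. That gives no comparison between the two. The sentence asserting $\langle\sigma_u\rangle_H\le\langle\sigma_u\rangle_{\hat H}$ for an $\hat H$ that ``contains $H$ after identifications'' has the inequality pointing the wrong way, as you yourself observe.

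The missing idea is a one-edge identity combined with the second Griffiths inequality in the form $\langle\sigma_u\sigma_v\rangle\ge\langle\sigma_u\rangle\langle\sigma_v\rangle$. For an edge $uv$ of a graph $H$, put $A=m_u^{H-uv}$, $B=m_v^{H-uv}$ and $C=\E_{H-uv}[\sigma_u\sigma_v]$. Writing $e^{\beta\sigma_u\sigma_v}=\cosh\beta\,(1+t\sigma_u\sigma_v)$ gives exactly $m_u^H=(A+tB)/(1+tC)$. Since $C\ge AB$, and $A,B\ge0$ by the first Griffiths inequality with $h>0$, you get $m_u^H\le(A+tB)/(1+tAB)$. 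By the addition formula for $\tanh$ this says $\arctanh(m_u^H)\le\arctanh A+\arctanh(tB)$.

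Now delete the edges $uw_1,\dots,uw_r$ at $u$ one at a time. At the $i$th step the current graph (with $uw_1,\dots,uw_i$ removed) is a subgraph of $G-uw_i$. Griffiths monotonicity therefore bounds the neighbour magnetization by $m_{w_i}^{G-uw_i}=\tanh L_{w_i\to u}$. Since $\arctanh(t\,\cdot)$ is increasing, that step contributes at most $f_\beta(L_{w_i\to u})$. Telescoping down to the isolated vertex, whose magnetization is $\tanh h$, gives both inequalities, starting from $H=G-uv$ or $H=G$. This argument is what actually proves your comparison $m_u^H\le m_u^{\hat H}$. Once you have it, the auxiliary graph $\hat H$ is unnecessary.
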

\begin{proof}
For an edge $uv$ of $H$, let
$A=m_u^{H-uv}$, $B=m_v^{H-uv}$, and
$C=\E_{H-uv,h}[\sigma_u\sigma_v]$.
Adding the edge gives the exact identity
$m_u^H=(A+tB)/(1+tC)$.
Since $C\ge AB$ and $A+tB\ge0$,
\[
 \arctanh(m_u^H)\le\arctanh A+\arctanh(tB).
\]
Remove the edges incident to $u$ one at a time.
At the removal of $uw$, monotonicity bounds the neighbor
magnetization by $m_w^{G-uw}$.
The final isolated vertex has magnetization $\tanh h$.
Starting with $G-uv$ or $G$ proves the two inequalities.
\end{proof}

\begin{proposition}\label{prop:response}
Uniformly over the graph, its size, and $\Delta\ge3$,
\begin{enumerate}
\item if $J_e\le\beta_c$, then $m_u(h)\le4h^{1/3}$ for every $u$ and $h>0$;
\item if $J_e\le\beta$ and $0\le\varepsilon=(\Delta-1)\tanh\beta-1
 \le1/64$, then
$m_u(h)\le16(\sqrt\varepsilon+h^{1/3})$ for every $u$ and $h>0$.
\end{enumerate}
\end{proposition}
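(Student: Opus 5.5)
First reduce to uniform couplings equal to the bound. By Griffiths' second inequality, $m_u(h)$ is nondecreasing in each $J_e\ge0$, so raising every coupling to $\beta_c$ (resp.\ $\beta$) only increases $m_u$. Lemma~\ref{lem:cavity} then applies with $t=\tanh\beta$.

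The plan is a maximum-principle argument on the messages. Put $L^*=\max_{u\to v}L_{u\to v}$, which is finite since the graph is finite and $h<\infty$. Pick an oriented edge attaining the maximum. Inequality \eqref{eq:cavity}, the monotonicity of $f_\beta$, and the degree bound give
\[
 L^*\le h+(\Delta-1)f_\beta(L^*).
\]
Next use the expansion $\arctanh(t\tanh x)\le t\tanh x$ for small $t\tanh x$; more carefully, $f_\beta(x)\le t\tanh x+ c\,(t\tanh x)^3$. Writing $d=\Delta-1$ and $s=\tanh L^*$, this becomes $\arctanh s\le h+d\,f_\beta(\arctanh s)$. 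At criticality $dt=1$. Then
\[
 \arctanh s - d\arctanh(s/d)=\sum_{k\ge1}\frac{s^{2k+1}}{2k+1}\bigl(1-d^{-2k}\bigr)\ge \frac{s^3}{3}\Bigl(1-\frac1{d^2}\Bigr)\ge \frac{s^3}{3}\cdot\frac34,
\]
using $d\ge2$. Hence $s^3\le 4h$ and $s\le (4h)^{1/3}$. This is the key step, and it is uniform in $\Delta$ because $1-d^{-2}\ge3/4$.

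Finally transfer the bound to the root via \eqref{eq:root}, which has one extra term:
\[
 \arctanh m_u\le h+\frac{d+1}{d}\,d\arctanh(s/d)\le h+\tfrac32\arctanh s .
\]
Both terms are bounded by $O(h^{1/3})$ when $h\le1$. When $h\ge1$ the claim $m_u\le1\le4h^{1/3}$ is trivial. Tracking constants should give $m_u\le4h^{1/3}$. The obstacle is bookkeeping the constants: one must handle $h$ moderately small but $s$ not small, where $\arctanh$ is large. For this I would use the exact series above, which is valid for all $s<1$, and apply $\tanh$ at the end, using $\arctanh m_u\le 2\arctanh s+h$ and subadditivity-type bounds for $\tanh$.

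For part 2 the same fixed-point inequality holds with $dt=1+\varepsilon$. Writing $dt\,\arctanh(s/(dt)\cdot\ldots)$ is awkward, so I would instead bound $d f_\beta(\arctanh s)=d\arctanh(ts)$ and compare with $\arctanh s$:
\[
 \arctanh s-d\arctanh(ts)\ge \frac{s^3}{3}\bigl(1-dt^3\bigr)-\varepsilon s\ge \frac{s^3}{4}-\varepsilon s
\]
for $\varepsilon\le1/64$, since $dt^3\le(1+\varepsilon)^3/d^2\le 1/3$. This gives $s^3/4\le h+\varepsilon s$. So either $s^3\le 8h$, or $s^2\le 8\varepsilon$, which yields $s\le 2(h^{1/3}+\sqrt{2\varepsilon})$. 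The root step multiplies by at most a constant, as before. The smallness $\varepsilon\le1/64$ keeps $s$ away from $1$ in the second case, so $\arctanh$ stays comparable to its argument. The generous constant $16$ absorbs these losses.
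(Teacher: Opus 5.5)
Your proof is correct in substance and shares the paper's skeleton: reduction to uniform couplings by Griffiths monotonicity, a maximum principle on the messages through \eqref{eq:cavity}, and one more use of \eqref{eq:root} at the root. Where you differ is in how you handle the scalar inequality.

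The paper stays in the variable $X$ and studies $F_d(x)=x-d\arctanh(\tanh x/d)$. It needs three things:
\begin{itemize}
\item[(i)] $F_d$ increasing, to pass from $F_d(X)\le h$ to $X\le x_h$;
\item[(ii)] a cubic lower bound $F_d(x)\ge x^3/16$, which is only proved on $[0,1]$;
\item[(iii)] in part 2, the perturbation bound \eqref{eq:cubic-perturb} plus a separate argument ($F_\beta$ increasing on $[1,\infty)$, $F_\beta(1)\ge1/32$) to confine $X$ to $[0,1]$, since $F_\beta$ is not monotone near $0$.
\end{itemize}
You pass to $s=\tanh X$ and expand both inverse hyperbolic tangents termwise. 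This gives $\arctanh s-d\arctanh(s/d)\ge s^3/4$ and $\arctanh s-d\arctanh(ts)\ge\frac{s^3}{3}(1-dt^3)-\varepsilon s$ for every $s\in[0,1)$. So you need neither monotonicity nor an a priori localization of $X$, and the constants improve.

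The paper's choice buys a painless root step. It keeps $X$ itself, and in part 1 the exact identity $h+d f_{\beta_c}(x_h)=x_h$, so it closes with $f_\beta(x)\le tx$. You must instead convert back from $s$ and linearize $\arctanh(ts)$, which is harmless only because $s$ is small in the nontrivial regime.

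Three details need repair.
\begin{itemize}
\item[(1)] The inequality $\arctanh(t\tanh x)\le t\tanh x$ is backwards, since $\arctanh y\ge y$ for $y\ge0$. You never use it, so delete it.
\item[(2)] In part 2, $dt^3\le1/3$ gives $\frac13(1-dt^3)\ge\frac29$, not $\frac14$. Even the sharp bound $dt^3\le(65/64)^3/4\approx0.262$ falls slightly short of $\frac14$ at $d=2$, $\varepsilon=1/64$. With $\frac29 s^3\le h+\varepsilon s$, your dichotomy yields $s\le3\sqrt\varepsilon+(9h)^{1/3}$ instead.
\item[(3)] The ``obstacle'' of moderate $h$ with $s$ not small does not arise, and the bookkeeping you left as ``should give'' is short. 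The claims are trivial once $4h^{1/3}\ge1$, respectively $16(\sqrt\varepsilon+h^{1/3})\ge1$. Below these thresholds, $s<0.4$, respectively $s<3/16$, so $\arctanh$ is within about six percent of linear on the relevant range. Then \eqref{eq:root} gives $m_u\le h+1.6s<3h^{1/3}$, respectively $m_u<5\sqrt\varepsilon+4h^{1/3}$, comfortably within the stated constants.
\end{itemize}
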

\begin{proof}
By coupling monotonicity it suffices to use uniform coupling at its
upper bound. Isolated vertices satisfy the assertions directly.
Otherwise put $d=\Delta-1$ and
$X=\max_{u\to v}L_{u\to v}$, a finite attained maximum.
At $\beta=\beta_c$, Lemma~\ref{lem:cavity} gives
$F_d(X)\le h$, where
\[
 F_d(x)=x-d\arctanh(\tanh x/d),\qquad
 F_d'(x)=\frac{(1-d^{-2})\tanh^2x}{1-d^{-2}\tanh^2x}.
\]
The function is strictly increasing on $(0,\infty)$, and
$F_d(x)\ge x^3/16$ for $0\le x\le1$, because
$\tanh x\ge x/2$ there and $d\ge2$.
For $h\le1/16$, the solution $x_h$ of $F_d(x_h)=h$
satisfies $X\le x_h\le(16h)^{1/3}$.
Since $f_{\beta_c}(x)\le x/d$,
\[
 m_u(h)\le h+\Delta f_{\beta_c}(x_h)
       =x_h+f_{\beta_c}(x_h)
       \le\tfrac32(16h)^{1/3}<4h^{1/3}.
\]
For $h>1/16$ the bound follows from $m_u\le1$.

For the second assertion put $t=(1+\varepsilon)/d<5/8$ and
$F_\beta(x)=x-d\arctanh(t\tanh x)$. Again $F_\beta(X)\le h$.
Comparison in the parameter $t$ gives
\begin{equation}\label{eq:cubic-perturb}
 0\le F_d(x)-F_\beta(x)
       \le\frac{\varepsilon\tanh x}{1-t^2}\le2\varepsilon x.
\end{equation}
Also
\[
 F_\beta'(x)=
 \frac{-\varepsilon+(1+\varepsilon-t^2)\tanh^2x}
      {1-t^2\tanh^2x}>0\qquad(x\ge1),
\]
and $F_\beta(1)\ge1/16-2\varepsilon\ge1/32$.
Thus $h\le1/64$ forces $X<1$, even though
$F_\beta$ need not be increasing near zero.
Using \eqref{eq:cubic-perturb},
$X^3/16\le h+2\varepsilon X$.
If $X\ge8\sqrt\varepsilon$, absorption gives
$X\le(32h)^{1/3}$; otherwise $X<8\sqrt\varepsilon$.
In either case $X\le8\sqrt\varepsilon+4h^{1/3}$.
Since $f_\beta(x)\le tx$ and $\Delta t<2$, \eqref{eq:root} yields
$m_u(h)\le16\sqrt\varepsilon+9h^{1/3}$.
For larger $h$, the stated bound follows from $m_u\le1$.
\end{proof}

\section{Positive-series tilting}

At zero external field put
$M=\sum_{u\in V}\sigma_u$, $Z(h)=\E_0e^{hM}$, and
$\chi_v=\sum_u\E_0[\sigma_v\sigma_u]$.
Global spin-flip symmetry gives
\begin{equation}\label{eq:tilt}
 m_v(h)Z(h)=\E_0[\sigma_v\sinh(hM)]\ge h\chi_v.
\end{equation}
To verify the inequality, expand the hyperbolic sine.
Every term $\E_0[\sigma_vM^{2j+1}]$ is nonnegative by the first
Griffiths inequality after expanding the power and canceling squares.
The series is absolutely convergent since $|M|\le n$.

At criticality Proposition~\ref{prop:response} gives
\[
 \log Z(h)=\int_0^h\sum_um_u(s)\,ds\le3nh^{4/3}.
\]
Combining this with \eqref{eq:tilt}, then choosing
$h=(6n)^{-3/4}$, proves
\begin{equation}\label{eq:critical-chi}
 \max_v\chi_v\le4h^{-2/3}e^{3nh^{4/3}}
                 =4\sqrt{6e}\sqrt n.
\end{equation}
This is a pointwise row bound; an estimate on average magnetization
alone would not supply the same conclusion.

If $\beta\le\beta_c$, \eqref{eq:critical-chi} already gives
\eqref{eq:crossover} with $\varepsilon=0$.
For $0\le\varepsilon=(\Delta-1)\tanh\beta-1\le1/64$,
the second response estimate gives
\begin{align}
 \log Z(h)&\le16n\sqrt\varepsilon\,h+12nh^{4/3},\label{eq:mgf}\\
 \chi_v&\le16(\sqrt\varepsilon/h+h^{-2/3})
                   e^{16n\sqrt\varepsilon h+12nh^{4/3}}.\label{eq:chi-window}
\end{align}
Let $a=n^{1/4}\sqrt\varepsilon$, $b=1+a$, and
$h=n^{-3/4}/(16b)$. The exponent in \eqref{eq:chi-window}
is at most $a/b+b^{-1}=1$.
Its prefactor divided by $16\sqrt n$ is
$16a(1+a)+16^{2/3}(1+a)^{2/3}$.
Using $2a\le1+a^2$ and $(1+a)^2\le2(1+a^2)$ bounds this
by $32(1+a^2)$. Hence
\begin{equation}\label{eq:window-chi}
 \max_v\chi_v\le512e(\sqrt n+n\varepsilon).
\end{equation}
For $\varepsilon>1/64$, the trivial row bound $n$
is already smaller than $512e\,n\varepsilon$.

\section{Fields and pinnings}

We record the exact use of the field comparison, since normalized
influence requires more than a covariance comparison.
Theorem~1.1 of Ding, Song and Sun~\cite{DSS} states that, for a
ferromagnetic Ising model and fields $g$ and $b\ge0$,
\[
 m_v(g+b)-m_v(g-b)\le m_v(b)-m_v(-b).
\]
It allows infinite pinning fields provided
$\min\{|g_w|,b_w\}<\infty$ at each vertex.
For $u\ne v$, take $b_u=+\infty$, $b_w=0$ off $u$,
$g_u=0$, and $g_w=a_w$ off $u$.
The field at a pinned spin is irrelevant. Thus the left side is
$2\Psi_a(u,v)$, while zero-field spin-flip symmetry makes the
right side $2\E_0[\sigma_u\sigma_v]$.
Ferromagnetic monotonicity, including the diagonal convention, gives
\begin{equation}\label{eq:DSS}
 0\le\Psi_a(u,v)\le\E_0[\sigma_u\sigma_v].
\end{equation}
Summing \eqref{eq:DSS} and using
\eqref{eq:critical-chi} or \eqref{eq:window-chi} proves the bounds
with $k=n$.

After pinning vertices $S$ to a configuration $\tau$, the model on
$W=V\setminus S$ has the same couplings on $G[W]$ and fields
$a'_w=a_w+\sum_{s\in S:sw\in E}J_{sw}\tau_s$.
Its maximum degree is at most $\Delta$.
Applying the already field-uniform bounds to this graph of size $k$
proves both theorems.
Finally, $\varepsilon\le A/\sqrt n$ implies
$k\varepsilon\le A\sqrt k$ because $k\le n$.

\section{Glauber dynamics above the threshold}

We apply the entropy localization theorem in
\cite[Theorem~3.21]{CCYZ}.
In its normalization an interaction matrix $J\succeq0$ gives density
proportional to $\exp(\frac12\sigma^\mathsf TJ\sigma+a\cdot\sigma)$.
If $\mathfrak c(s)$ bounds the largest influence eigenvalue at interaction $sJ$
for every field, the approximate tensorization constant is at most
\[
 \exp\left\{\|J\|_2\int_0^1\mathfrak c(s)\,ds\right\}.
\]
Approximate tensorization with constant $K_{\rm AT}$ means that
the entropy of every nonnegative function is bounded by
$K_{\rm AT}$ times the sum of its expected single-site conditional
entropies.

Let $W$ be the weighted adjacency matrix of the given couplings.
The matrix $J=W+\Delta\beta I$ is positive semidefinite and
$\|J\|_2\le2\Delta\beta$; its diagonal shift changes the energy
by a constant. Using the temperature bound $q=s\beta$, the
localization exponent is at most
$2\Delta\int_0^\beta\mathfrak c(q)\,dq$,
now writing $\mathfrak c(q)$ for the bound at temperature upper bound $q$.
Put $d=\Delta-1$ and $b_c=\arctanh(1/d)$.
For $q<b_c$, the subcritical spectral bound from
\cite[Lemma~3.23]{CCYZ} and the critical cap give
\begin{equation}\label{eq:subcritical}
 \mathfrak c(q)\le\min\left\{17\sqrt n,\,
                \frac{\Delta}{d(1-d\tanh q)}\right\}.
\end{equation}
This applies to nonuniform couplings bounded by $q$ as well.
Indeed, \eqref{eq:DSS} bounds their field-dependent influence entries
by their zero-field correlations, GKS bounds those by the uniform
coupling-$q$ correlations, and the Perron eigenvalue is increasing
under entrywise comparison of nonnegative matrices.
Only a spectral bound is used in \eqref{eq:subcritical}.

The exact identity
\[
 \frac{\Delta}{d(1-d\tanh q)}
  =\frac1d+\frac{2}{(\Delta-2)(e^{2(b_c-q)}-1)}
\]
and $e^x-1\ge x$ allow integration up to
$b_c(1-n^{-1/2})$; use $17\sqrt n$ on the remaining interval.
Since $\Delta b_c\le3$,
\begin{equation}\label{eq:below-integral}
 2\Delta\int_0^{b_c}\mathfrak c(q)\,dq
           \le105+\frac{\Delta}{\Delta-2}\log n .
\end{equation}
For $q\in[b_c,\beta]$, Theorem~\ref{thm:crossover} gives
$\mathfrak c(q)\le C[\sqrt n+n(d\tanh q-1)]$.
With $r=d\tanh q-1\le1/64$,
\[
 \frac{dq}{dr}=\frac1{d(1-\tanh^2q)}
       \le\frac{64}{39d},\qquad 2\Delta\frac{dq}{dr}<5.
\]
It follows that
\[
 2\Delta\int_{b_c}^{\beta}\mathfrak c(q)\,dq
       \le5C(\sqrt n\,\varepsilon+n\varepsilon^2/2).
\]
If $\beta<b_c$, simply omit this last integral and retain
\eqref{eq:below-integral}. Thus
\[
 K_{\rm AT}\le e^{105}n^{\Delta/(\Delta-2)}
                         e^{5C(A+A^2/2)}.
\]

In zero field, $\log(1/\mu_{\min})\le n(\log2+\Delta\beta)\le4n$
under $\varepsilon\le1/64$.
Single-site entropy contraction followed by Pinsker's inequality
therefore bounds $T_{\rm mix}$ by a numerical constant times
$nK_{\rm AT}\log(en)$.
This proves the first assertion of Corollary~\ref{cor:mixing}.
For fixed $B$, the last assertion follows from $A\le B\sqrt{\log n}$ and
$A+A^2/2\le1/2+A^2\le1/2+B^2\log n$.
The restriction $\varepsilon\le1/64$ then holds for all sufficiently large $n$.

\end{document}